\let\NAT@parse\undefined
\newcommand{\ie}{i\/.\/e\/.,\/~}%
\newcommand{\eg}{e\/.\/g\/.,\/~}%
\newcommand{\cf}{cf\/.\/~}%
\newcommand{\T}{^{\mathsf{T}}}
\newtheorem{assum}{Assumption}
\newtheorem{thm}{Theorem}
\newtheorem{rem}{Remark}
\newtheorem{defn}{Definition}
\def\BibTeX{{\rm B\kern-.05em{\sc i\kern-.025em b}\kern-.08em
    T\kern-.1667em\lower.7ex\hbox{E}\kern-.125emX}}
\begin{document}
\title{Parameter Filter-based Event-triggered Learning}
\author{Sebastian Schlor\authorrefmark{1}, Friedrich Solowjow\authorrefmark{1}, and Sebastian Trimpe 
	\thanks{\authorrefmark{1} Equally contributing}
\thanks{This work was supported in part by the Cyber Valley Initiative and the
	International Max Planck Research School for Intelligent Systems (IMPRS-IS).
Sebastian Schlor thanks the German Research Foundation (DFG) for support
of this work within grant AL 316/13-2 and within the German Excellence
Strategy under grant EXC-2075 - 285825138; 390740016. He acknowledges the support
by the Stuttgart Center for Simulation Science (SimTech)}
\thanks{Sebastian Schlor is with the University of Stuttgart, Institute for Systems Theory and Automatic Control, 70550 Stuttgart, Germany (e-mail: schlor@ist.uni-stuttgart.de). }
\thanks{Friedrich Solowjow and Sebastian Trimpe are with 
the Institute for Data Science in Mechanical Engineering, RWTH Aachen University, 52068 Aachen, Germany, and
the Intelligent Control Systems Group, Max Planck Institute for Intelligent Systems,
70569 Stuttgart, Germany  (e-mail: \{friedrich.solowjow, trimpe\}@dsme.rwth-aachen.de).}}
\maketitle

\begin{abstract}
Model-based algorithms are deeply rooted in modern control and systems theory. However, they usually come with a critical assumption -- access to an accurate model of the system. 
In practice, models are far from perfect. Even precisely tuned estimates of unknown parameters will deteriorate over time. 
Therefore, it is essential to detect the change 
to avoid suboptimal or even
dangerous behavior of a control system. 
We propose to combine statistical tests with dedicated parameter filters that track unknown system parameters from state data. 
These filters yield point estimates of the unknown parameters and, further, an inherent notion of uncertainty. 
When the point estimate leaves the confidence region, we trigger active learning experiments.
We update models only after enforcing a sufficiently small uncertainty in the filter.
Thus, models are only updated when necessary and statistically significant while ensuring guaranteed improvement, which we call event-triggered learning.
We validate the proposed method in numerical simulations of a DC motor in combination with model predictive control.
\end{abstract}

\begin{IEEEkeywords}
Event-triggered Learning, Statistical Learning, Stochastic Systems 
\end{IEEEkeywords}

\section{INTRODUCTION}
\label{sec:introduction}

\begin{figure}[!htb]
	\includestandalone[width=\columnwidth, mode=buildnew]{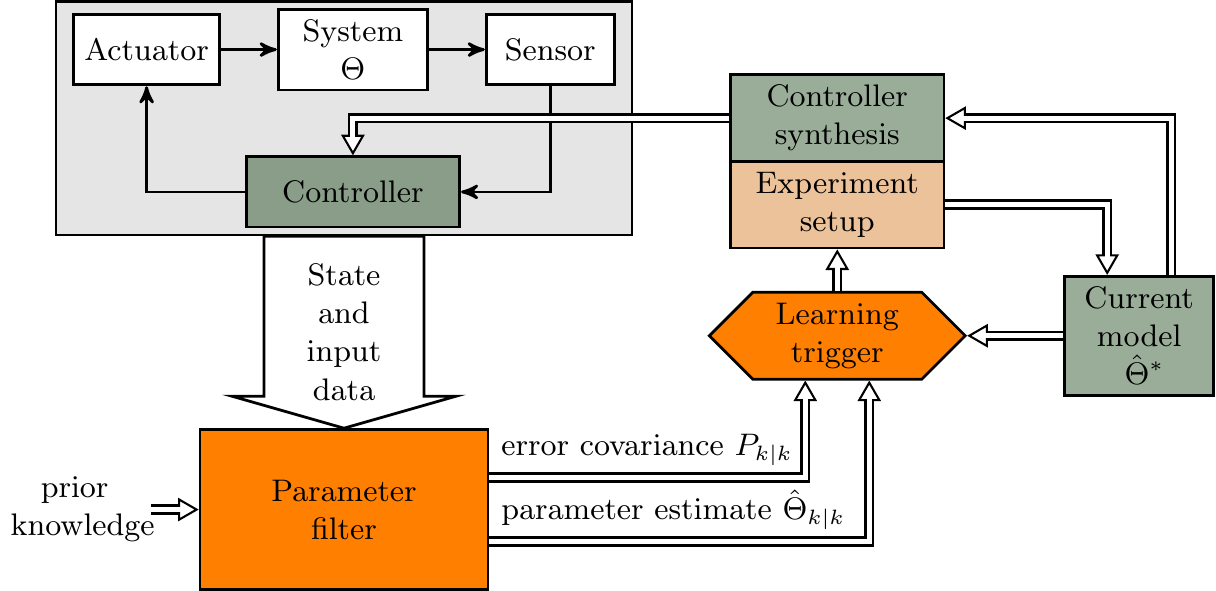}
	\caption{
		The figure schematically depicts the event-triggered learning framework.
		A model is used to synthesize a controller, which is applied to the system.
		The parameter filter outputs current estimates of model parameters with its uncertainty obtained from the measured data.
		The learning trigger compares the current model belief with the parameter filter estimate of the model. Updates are only triggered if there is a significant discrepancy between them.
	}
	\label{fig:tikz:BlockDiagramMethods}
\end{figure}
\IEEEPARstart{N}{ever} \emph{change a running system} is a popular heuristic to ensure the dependable operation of engineering systems.
At the same time, learning-based techniques are becoming increasingly popular in the control community, and learning usually requires some form of exploration to the system; that is, change. 
Despite the popularity, success stories are still rare. 
One key issue are the opposing objectives of many control tasks and the requirements for successful learning outcomes.
On the one hand, we usually aim for controllers that stabilize a system with the goal of avoiding deviations from a setpoint or reference. 
But, if a system is well regulated and barely moving, data is mainly dominated by noise.
Using this data for learning is a bad idea and leads to profound theoretical issues that might result in divergence of parameters and damage to the system.
Therefore, excitation is a critical requirement to learn something meaningful; however, often not desired as it deteriorates the control performance.  
Hence, learning permanently can be problematic. 
Instead, we want to detect the instances when updates are useful and -- only then -- effectively execute learning over a limited period of time.

The main idea of our event-triggered learning (ETL) framework is depicted in Figure~\ref{fig:tikz:BlockDiagramMethods}.
In the top left corner, we see the controlled system that, during normal operation, will have little excitation and thus yield uninformative data in most cases.
The parameter filter (bottom left) estimates the system's parameters to test if the model utilized for control is still reliable. 
So far, there is no learning involved.
Changes in the dynamics or environment might occur naturally and deteriorate the control performance or lead to constraint violations.
When there is an actual change in the system's behavior, the learning trigger shall detect this and triggers learning experiments.
During this learning phase, the system is actively excited to ensure accurate learning outcomes. 
After reaching sufficient accuracy, we stop the excitation, and the learning phase terminates. 
We update the model including the controller, and the parameter filter keeps monitoring the parameters.
Unless there is more change in the future, there is no more need for further learning. 


There is prior work on ETL that is closely tailored to specific applications and downstream tasks, such as linear quadratic control~\cite{Schluter2019} or the reduction of communication in networked control systems~\cite{Solowjow2020}. 
The triggers developed therein are based on a signal that is specific to the task: quadratic control performance \cite{Schluter2019} and communication rates \cite{Solowjow2020}. 
While this is helpful for the specific task, it also limits the algorithm to exactly that application.  
Here, we propose to build triggers based on the model quality. 
This way, we separate triggering from the downstream task and thus -- in principle -- make ETL applicable to any task that uses the model.  
Additionally, by considering the model quality as our target signal, we obtain statistical information regarding the parameters that is beneficial for designing the subsequent learning experiments. 

The proposed ETL approach yields the following three main improvements:
1) a statistical test that directly acts on the model quality;
2) point estimates that can be used as a new model; and
3) uncertainty quantification of the new model.
In contrast to previous ETL, 2) allows for efficient online learning, and 3) gives us a notion about whether the learned model is sufficient or further excitation is required.

The proposed design is flexible and can be combined with any model-based downstream algorithm or control architecture.
To demonstrate the flexibility of the proposed method, we consider a model predictive control task. 
Due to the optimization-based formulation of the controller architecture, we can readily incorporate learning objectives in the optimization, thus arriving at an active learning framework.
We show the benefits of the proposed method for a DC motor use case where we simulate and successfully detect changes in the dynamics.

\subsection{Related work}\label{sec:relwork}

The general idea of ETL, \ie learning only when necessary, has recently been proposed in \cite{so18,Solowjow2020,beuchert19,beuchert20,baumann2019event,Schluter2019,umlauft2019feedback,umlauft2020smart}, but in different settings than the one herein. Specifically, prior work on ETL builds triggers for a specific task or application.  
In particular, \cite{so18,Solowjow2020} provide the theory to reduce communication in networked control systems and was experimentally validated with IMU-sensor networks that monitor human gaits \cite{beuchert19,beuchert20}.
In \cite{baumann2019event}, ETL was extended to event-triggered pulse control and in~\cite{Schluter2019} to LQR control.
All of the above papers are not directly considering the learning problem and, further, do not leverage the statistical information that the learning trigger provides.
We discuss how the variance of our parameter filter can be reduced with appropriate control inputs and, very importantly, how it converges back to a steady-state during the nominal operation of the plant.
This steady-state variance is problematic for model updates and should be avoided due to the estimates performing a random walk in a possibly large uncertainty ellipsoid.
Purely adaptive methods that update models and controllers continuously often suffer from this issue, while our approach allows us to maintain accurate estimates since updates are connected to excitation and are only performed when the learning trigger detects a significant deviation.

Similar ideas that stress the importance of learning only when necessary have been, for example, introduced in~\cite{umlauft2019feedback,umlauft2020smart}. New training points are only added to a Gaussian process regression model when there is a significant improvement. 
Further, the main narrative of ETL is also consistent with insights from cognitive science, where human learning is quantified with internal models and surprisal \cite{butz2003anticipatory,humaidan2020fostering}.

Adaptive control algorithms (see~\cite{sastry2011adaptive} or~\cite{ioannou2012robust} for an overview) are closely related to ETL and consider a similar problem -- coping with partly unknown or changing systems.
While conceptually similar, at its core, it is different.  
Adaptive control strives to continuously update controllers; ETL does not and updates only sporadically when there is a need. 
It is well known that some adaptive control schemes suffer from temporary instability, so-called bursting~\cite{anderson2005failures}.
It can occur if the system converged to a steady-state and the measured signals are not persistently exciting.
Then, divergence may become a problem. 
In our work, we permanently estimate the system and its uncertainty, but we update only when needed.  This way, we avoid any bursting behavior.

Iterative learning control is similar as well but primarily developed for improving tracking performance in a repetitive setting. 
Instead of models, usually, the control input itself is optimized~\cite{ahn2007iterative,bristow2006survey}.
When looking closely at the setup, it is a different problem setting as in ETL, and to the best of our knowledge, changing dynamics are not considered.

Dual control circumvents divergence issues by considering two objectives simultaneously: optimizing control performance and 
minimizing model uncertainty.
A survey on dual control can be found in~\cite{unbehauen2000adaptive}, and a general textbook on adaptive dual control is given in~\cite{filatov2004adaptive}.
The authors of~\cite{Heirung2012} proposed a dual MPC scheme that takes the error covariance of the estimated model parameters into account in the cost function to be minimized. A similar control objective has already been introduced in \cite{wittenmark1975dual}.
In these approaches, however, model uncertainty is artificially increased over time to enforce the excitation of the system. Intuitively, the controller permanently forces the system to move in order to continuously update the model parameters. 
Our approach is different; we leverage statistical tests to detect the need for learning and only excite the system when necessary.


Detecting change in system dynamics is also a relevant topic in performance monitoring and assessment.
Closed-loop data is used to evaluate the control performance in terms of minimizing the output variance~\cite{Qin1998, harris1989assessment}.
The outcome of this analysis of variance (ANOVA) is then compared to a benchmark performance.
Often, minimum variance control serves as this benchmark; however, also user-specific criteria can define the desired performance if minimum variance control is not achievable nor desired~\cite{Jelali2006}.
In~\cite{Julien2004}, a performance benchmark for MPC is derived. Also, the benefit of updating the controller model is estimated. 
Further, model-plant mismatches are purely deduced on variance assessment. 
The above methods struggle with providing guidance for corrective action ~\cite{bauer2016current}.
We address these shortcomings by leveraging the inherent parameter filter uncertainty ellipsoid to derive optimal excitation signals.
These signals produce critical insights into the system's behavior for appropriate model updates.

There are many system identification techniques for estimating models of linear systems from data.
A broad overview is given in~\cite{Ljung1999}. 
The general idea of utilizing filtering techniques to estimate model parameters is also standard and has been investigated, \eg in~\cite{Goodwin1977,ljung1983theory}. 
We call these techniques parameter filters due to their functionality.
In~\cite{GARCIA201667}, a Kalman filter for state-space system identification is used in an iterative adaptive control scheme.
We consider a similar Kalman filter. 
However, in our work, the Kalman filter is used not only to estimate the system's parameters but also to derive trigger conditions.
Combining the filter with statistical tests and utilizing the beneficial posterior properties for learning on necessity is new and one of our key contributions.
Additionally, there has been a lot of research in designing optimal excitation signals to minimize the error covariance~\cite{Ljung1999,Bombois2011}.
We leverage the shape of the estimated uncertainty ellipsoid to design control inputs that minimize said ellipsoid.

\subsection{Contributions}

\begin{samepage}
	In summary, this article makes the following contributions:
	\begin{itemize}
		\item proposing a parameter filter-based ETL approach that yields point estimates and uncertainty ellipsoids that can directly be combined into a powerful statistical test;
        \item leveraging the induced uncertainty to decide if we need more data for suitable model updates and designing optimal input signals based on the point estimate; and
		\item incorporating robustness margins and prior knowledge into the statistical test to target specific parameters.
	\end{itemize}
\end{samepage}

\section{PROBLEM AND MAIN IDEA}\label{sec:problem}
Next, we introduce the considered system and control architecture.
This is followed by a precise problem formulation.

\subsection{System dynamics}
We consider a linear, discrete-time system
\begin{align}
	x_{k+1} = Ax_{k} + Bu_{k} + w_{k}
	\label{eq:LTIdef}
\end{align}
with the state $x_k\in \mathbb{R}^n$, the control input $u_k\in \mathbb{R}^m$, the random disturbance $w_k \in \mathbb{R}^n$  at discrete-time $k \in \mathbb{N}$, and the unknown system matrices $A \in \mathbb{R}^{n \times n}$ and $B \in \mathbb{R}^{n \times m}$ for some $n,m\in\mathbb{N}$.
Furthermore, the disturbances $\{w_k\}$ are assumed to be independent and identically distributed (i\/.\/i\/.\/d\/.\/) and drawn from a normal distribution with mean $0$ and known covariance $\Sigma_{w}$, which is denoted by $w_k\sim\mathcal{N}(0,\Sigma_{w})$.
Thus, the unknown parameters of the system can be fully parameterized by the stacked matrix $\Theta\T = \begin{bmatrix}A ~ B\end{bmatrix} \in \mathbb{R}^{n \times (n+m)}$.
Since the true system parameters $\Theta$ are unknown, we consider a model of the system, which is denoted by $\hat{\Theta}^* = \begin{bmatrix}\hat{A}^* ~ \hat{B}^*\end{bmatrix}$.
Here, this model will be used to design a controller (cf. controller synthesis block in Figure~\ref{fig:tikz:BlockDiagramMethods}).
Further, assume that the states and inputs are subject to constraints $x_k \in \mathcal{X} \subseteq \mathbb{R}^n$ and $u_k\in \mathcal{U} \subseteq \mathbb{R}^m$ for all $k \in \mathbb{N}$.

To ensure a well-behaved control algorithm, we require the following mild assumptions.
\begin{assum}
	\begin{enumerate}
		\item The pair $(A,B)$ is stabilizable.
		\item The sets $\mathcal{X}$ and $\mathcal{U}$ are compact, convex and contain the origin.
	\end{enumerate}
\end{assum}
The standard assumption of stabilizability ensures that the states of the system remain bounded under suitable control.
The second assumption states that the equilibrium of the system is located in the admissible state space.

\subsection{Objective: ETL with recursive learning}

We assume that the parameters $\Theta$ for system \eqref{eq:LTIdef} can change. 
Further, changes are assumed to be rare and sudden. 
The main problem we consider here is detecting these changes and subsequently updating the static model $\hat{\Theta}^*$ accordingly.

Clearly, we should leverage all of the available information to be as efficient as possible.
Despite a change in the dynamics, the old model still contains valuable information that can be leveraged to relearn the model $\hat{\Theta}^*$ efficiently. 

\subsection{Main idea: ETL architecture}
\label{sec:ETLarchi}

We propose to utilize a parameter filter (Sec.~\ref{sec:sysid}) that outputs a point estimate $\hat{\Theta}_{k|k}$ of the unknown system parameters $\Theta$ and additionally yields a posterior variance $P_{k|k}$.
Usually, a Kalman filter is used to estimate unknown states from observations.
Here, we estimate unknown system parameters from data.
Due to the underlying normal distributions in the filter, it is possible to design the learning trigger (Sec.~\ref{sec:KFTrigger}) based on well-established statistical tests for the null hypothesis 
\begin{equation}
 H_0: \hat{\Theta}_{k|k}=\hat{\Theta}^*.
 \label{eq:null}
\end{equation}
Whenever we reject the hypothesis, we trigger learning to adapt to the changes in the dynamics.

The filter output $\hat{\Theta}_{k|k}$ is usually subject to a high variance $P_{k|k}$ when there is only little movement in the system. 
Therefore, it is problematic to use it for directly updating the model $\hat{\Theta}^*$.
Instead, we first shrink the ellipsoid through optimized control inputs (Sec.~\ref{sec:Experiment}). 
This allows us to ensure that we relearn accurate models $\hat{\Theta}^*$.
Due to the recursive nature of the filter, we also obtain an effective way to incorporate prior knowledge. 
The interactions between the components are depicted in Figure~\ref{fig:tikz:BlockDiagramMethods}.





\section{RECURSIVE SYSTEM IDENTIFICATION}
\label{sec:sysid}

Identifying linear systems is a classical problem that has been addressed in many textbooks, \eg \cite{Ljung1999} and~\cite{ljung1983theory}, specifically for recursive system identification. Nonetheless, the identification of linear systems is still subject of recent research, \eg \cite{simchowitz2018learning,simchowitz2019learning}. 
In our work, we discuss the question of \emph{when to learn} by leveraging a state-space formulation of the parameter filter, which we have not found elsewhere. 

We start by summarizing recursive system identification approaches that are primarily based on~\cite{Goodwin1977}. 
These estimators provide point estimates and corresponding uncertainty ellipsoids. 
Further, depending on often implicitly made assumptions, the estimators have different properties.
At the same time, these differences are critical when deciding what filter to use for an ETL approach. 
Therefore, we first present the most common approaches to recursive identification -- A) least squares and B) Kalman filter-type estimators. 
Afterward, we show why the latter is most beneficial for the problem at hand.

For least squares-type approaches, we usually assume time-invariant dynamics as introduced in \eqref{eq:LTIdef}. We show how to estimate the static model parameters with standard techniques and afterward consider addressing potential changes through forgetting.
For example, we can reduce the influence of old data points through appropriate scaling. 

Alternatively, the change can explicitly be taken into account and directly encoded into the estimation procedure through appropriate assumptions.
Mathematically, we obtain the structure
\begin{align}\label{eq:LTVSystem}
	x_{k+1} &= A_k x_k + B_k u_k +w_k
\end{align}
and aim for time-varying estimates $\hat{A}_k$ and $\hat{B}_k$ of the nominal parameters $A_k$ and $B_k$. This approach leads to a Kalman filter-style parameter filter.

\subsubsection*{Vectorized process model}

To ease and unify notation, we introduce a vectorized system, \ie the model parameters will be contained in a common vector.
The general system~\eqref{eq:LTVSystem} can be rewritten as
\begin{align} \label{eq:LTVSystemTheta}
	\begin{split}
		x_{k+1}\T &= d_k\T \Theta_k + w_k\T\,,
	\end{split}
\end{align}
where  $\Theta_k\T = \begin{bmatrix}A_k ~ B_k\end{bmatrix} \in \mathbb{R}^{n \times (n+m)}$ is the stacked matrix containing all the system parameters and $d_k\T = \begin{bmatrix}x_{k}\T ~ u_{k}\T\end{bmatrix} \in \mathbb{R}^{n+m}$ contains the state and input at time $k$.

Further, we can vectorize the system matrix using the vectorization operator $\mathrm{vec}(\cdot)$, which stacks the columns of a matrix to one vector. We write $z_k = \operatorname{vec}(\Theta_{k}) \in \mathbb{R}^{n(n+m)}$.
The linear system \eqref{eq:LTVSystemTheta} can now be written in vector form as
\begin{align} 
		x_{k+1} &= C_k z_k + w_k\,,
	\label{eq:vecSysDescription}
\end{align}
with $C_k = \mathrm{I}_n \otimes d_k\T \in \mathbb{R}^{n \times n(n+m)}$, and $\otimes$ the Kronecker product.

\subsection{Least squares estimators}\label{subsec:LSEStimation}
The objective of least squares parameter estimation is to find estimated model parameters $\hat{z}_k$, which minimize the squared
prediction error at the current time as
\begin{align}\label{eq:minPredError}
	\hat{z}_k = \arg\underset{\hat{z}}{\min} ~\mathbb{E}\left[ (x_{k+1}- C_k\hat{z})\T (x_{k+1}- C_k\hat{z}) \right].
\end{align}
Clearly, the estimator highly depends on the available data $C_k$ and, additionally, on the (often implicitly made) assumptions on the parameters $z_k$.
Let
\begin{gather}
	X = \begin{bmatrix}
		x_1\\
		\vdots\\
		x_{k}
	\end{bmatrix},\qquad
	C = \begin{bmatrix}
		C_0\\
		\vdots\\
		C_{k-1}
	\end{bmatrix}
\end{gather}
be the stacked data vectors and matrices generated by the system from time $0$ to $k$. 
First, let us consider the standard time-invariant problem, \ie
\begin{align}\label{eq:constZ}
	z_{k+1} &= z_k \quad \forall k.
\end{align}
Then, the analytic solution to~\eqref{eq:minPredError} is given by 
\begin{align}\label{eq:ls_sol}
	\hat{z}_k = \left(C\T C\right)^{-1} C\T  X\,.
\end{align}
This batch least squares estimate can be reformulated recursively as
\begin{align}
	K_k &= P_{k-1} C_{k-1}\T \left(\mathrm{I} + C_{k-1} P_{k-1} C_{k-1}\T\right)^{-1}\\
	\hat{z}_k &= \hat{z}_{k-1} + K_k (x_k - C_{k-1}  \hat{z}_{k-1})\\
	P_k &= (\mathrm{I} - K_k C_{k-1}) P_{k-1}\,
\end{align}
with the starting values $\hat{z}_0$ and $P_0$ (see \eg \cite{Goodwin1977}).

In this estimator, the assumption of a constant model leads to a monotonically decreasing error covariance matrix~$P_k$.
With increasing amounts of data, the estimate converges, and the gain matrix $K_k$ decreases with respect to a suitable norm.
Thus, new measurements only have a small impact on the estimated parameters. The recursive least squares estimator eventually loses its adaptivity and stops updating.
It weights all samples equally and does not give preference to more recent data.
While this is desirable for constant model parameters, there will be issues if the system changes over time.

As stated earlier, one approach to making recursive least squares adaptable to changing parameters is the recursive least squares estimator with exponential forgetting~\cite{Goodwin1977}.
Here, observations that are $l$ time steps before the current time $k$ are weighted by $\lambda^{k-l}$, with the constant $\lambda\in (0,1)$ (typically $\lambda = 0.9, \dots, 0.99$).
Thus, increasing the influence of recent samples on the estimator.
This is equivalent to adapting~\eqref{eq:vecSysDescription} and~\eqref{eq:constZ} with a weighted output equation
\begin{align}
	\lambda^{k-l} x_{l+1} &= \lambda^{k-l} C_l z_l + w_l   	  & w_l \sim \mathcal{N}\left(0,\Sigma_w\right)\,.  \label{eq:optAss:RLSl1}
\end{align}
The scaling with $\lambda^{k-l}<1$ decreases the absolute value of the state. 
Hence, the squared prediction error is directly effected, and the influence of the corresponding point is lowered.
This can be interpreted as artificially decreasing the signal-to-noise ratio for old data since the additional disturbance $w_k$ is not weighted.
Even though we keep $z$ constant, the influence of past data is reduced because data with a lower signal-to-noise ratio is less informative.
As a consequence, the underlying model \eqref{eq:optAss:RLSl1} is equivalent to an unweighted observation with disturbances $w_k$ whose covariance matrix is time-varying.
We can simply divide both sides of \eqref{eq:optAss:RLSl1} by $\lambda^{k-l}$ and substitute $\left(\frac{1}{\lambda}\right)^{k-l} w_l$ by $w_l$.
Then, we obtain for all times $l \le k$
\begin{align}
	\begin{aligned}
		z_{l+1} &= z_l 																						&\qquad &\\
		x_{l+1} &= C_l z_l + w_l   	  &\qquad & w_l \sim \mathcal{N}\left(0,\left(\frac{1}{\lambda^2}\right)^{k-l}  \Sigma_w\right)
	\end{aligned}
\end{align}
as the data generation model for which recursive least squares with exponential forgetting yields the optimal estimate.

\subsection{Kalman filter approach}\label{subsec:KF}
In order to explicitly address potential change in the dynamics (\cf \eqref{eq:vecSysDescription}), we assume the additional structure for the changes
\begin{align}
	\begin{aligned}
		z_{k+1} &= z_k + \Delta z_k 												  					  &\qquad &\Delta z_k \sim \mathcal{N}\left(0,\Sigma_z\right)\\
		x_{k+1} &= C_k z_k + w_k	   	 	&\qquad & w_k \sim \mathcal{N}\left(0,\Sigma_w\right)\,,  \label{eq:optAss:KF}
	\end{aligned}
\end{align}
where the random variable $\Delta z_k$ induces the change in the system parameters. 
By $\Sigma_{z}$ we denote the covariance of the assumed model changes $ \Delta z_k$, which can be used as a tuning parameter. Here, one may also incorporate prior knowledge of the possible system changes, \eg which parameters can be affected in case of a load change, etc. 
We further consider the following standard assumptions. 
\begin{assum}\label{ass:KF}
	\begin{enumerate}
		\item The disturbance sequences $\{\Delta z_k\}$ and $\{w_k\}$ are i\/.\/i\/.\/d\/.\/ and $\mathbb{E}\left(\Delta z_k w_k\T\right) = 0$.
		\item The initial value $z_{0}$ is independent of $\Delta z_k$ and $w_k$.
	\end{enumerate}
\end{assum}

For such a system, the Kalman filter is the optimal state estimator~\cite{Matisko2012}.
In particular, it is the optimal Bayesian estimator that keeps track of the full posterior distribution.

Typically, the Kalman filter is used to estimate the state vector $\hat{x}_k$ of a linear state-space model when only the observations $y_k$ are available, and the state $x_k$ is hidden.
In our case, we assume access to the whole state of the system, but the systems transition matrices $A_k$ and $B_k$, respectively $z_k$, are unknown and might change over time. Our goal is a filter that yields the system parameters.

Essentially, we lift the standard estimation problem one level higher to estimate the process model from given states instead of estimating the state from given measurements.
Thus, for this vectorized system, the Kalman filter is given by
\begin{align}
	\hat{z}_{k+1|k} &= \hat{z}_{k|k}\\
	P_{k+1|k} &= P_{k|k} + \Sigma_{z}\\
	e_{k+1} &= x_{k+1} - C_k \hat{z}_{k+1|k}\\
	S_{k+1} &= C_k P_{k+1|k} C_k\T + \Sigma_w\\
	K_{k+1} &= P_{k+1|k} C_k\T S_{k+1}^{-1}\\
	\hat{z}_{k+1|k+1} &= \hat{z}_{k+1|k} + K_{k+1} e_{k+1}\\
	P_{k+1|k+1} &=P_{k+1|k} - K_{k+1} C_k P_{k+1|k} \,\label{eq:PUpdate}
\end{align}
with initial values $P_{0|0}$ and $\hat{z}_{0|0}$.
In this notation, a subscript ${l|k}$ indicates the estimate for time step $l$ given the data up to time step $k$.
If $\Sigma_{z}$ is chosen \emph{large}, the Kalman filter does weigh more recent samples higher than older ones. 
The resulting scheme is a recursive estimation method and allows for online learning.

Due to~\eqref{eq:optAss:KF} and Assumptions \ref{ass:KF},
the Kalman filter estimate $\hat{z}_{k|k}$ is normally distributed with mean $\mathbb{E}(z_k) $ and error covariance $P_{k|k} = \mathbb{E}( (\hat{z}_{k|k} - z_k) (\hat{z}_{k|k} - z_k)\T )$ (\cf \cite{Spall1984a}).

\section{PARAMETER FILTER LEARNING TRIGGER} \label{sec:KFTrigger}

In this section, we design the learning trigger, which detects significant deviations between models and dynamical systems (\cf Figure~\ref{fig:tikz:BlockDiagramMethods}). 
We derive the distribution of the parameter filter under the assumption that there is no change in dynamics ($\Delta z_k = 0$). Afterward, we propose a statistical test that validates if the data is consistent with the assumption $\Delta z_k = 0$.
This is exactly the same hypothesis as \eqref{eq:null} stated in Section~\ref{sec:ETLarchi}. In the following, we use the notation $\hat{z}$ instead of $\hat{\Theta}$ since we consider the equivalent vectorized formulation for the tests.

\subsection{Main idea of the learning trigger}

First, we explain the main idea of the learning trigger.
Essentially, there are three critical parts: i) the current model believe $\hat{z}^*$, which is kept constant and utilized for potential down-stream tasks, ii) the point estimate of the parameter filter $\hat{z}_{k|k}$, and iii) the covariance ellipsoid $P_{k|k}$.
Of course, there is also the ground truth $z_k$; however, these parameters are unknown.

Therefore, we construct a statistical test around the objects i) - iii) to infer if the model believe $\hat{z}^*$ is significantly different from the ground truth $z_k$. In particular, we can guarantee that the ground truth $z_k$ is with high probability contained inside an ellipsoid around the current estimate $\hat{z}_{k|k}$. 
Hence, when the old model belief $\hat{z}^*$ shows a large deviation from $\hat{z}_{k|k}$, we can also infer that the old model is not consistent anymore with the ground truth. Next, we make the main idea of the learning trigger mathematically precise and derive the corresponding distributions and confidence ellipsoids.

\begin{figure}[tb]
	\includestandalone[width=\columnwidth, mode=buildnew]{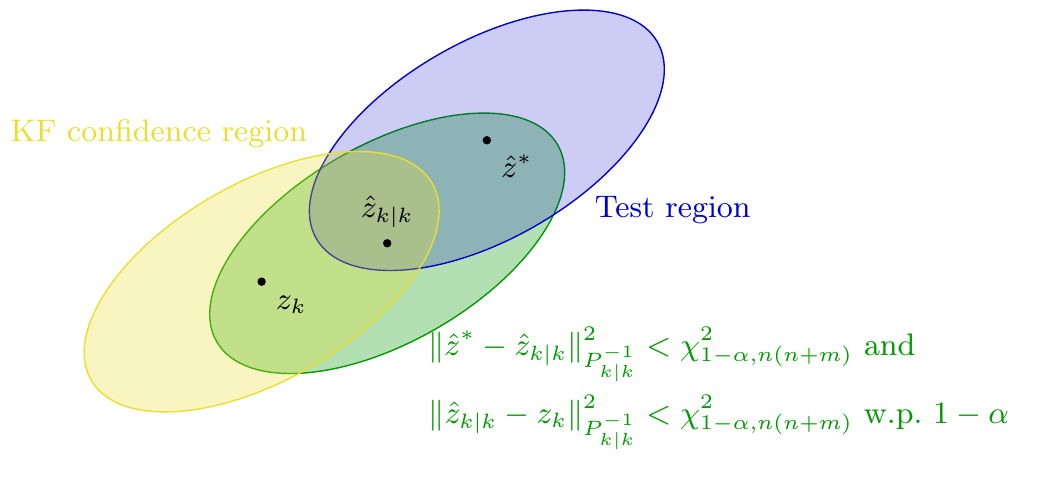}
	\caption[Visualization of confidence regions of the Kalman filter parameter estimation and model test regions.]{Visualization of confidence regions of the Kalman filter (KF) parameter estimation and model test regions.
		The estimated parameters $\hat{z}_{k|k}$, the true system parameters $z_k$, and the model parameters $\hat{z}^*$ are depicted as points in parameter space. The yellow ellipsoid is the confidence region of the Kalman filter estimation in which the error between $z_k$  and $\hat{z}_{k|k}$ is contained with probability $1-\alpha$.
		If the parameter estimate is inside the test region around the current model, both, the model and the true system parameters, are located inside the green ellipsoid with probability $1-\alpha$.}
	\label{fig:Ellipses}
\end{figure}

\subsection{Learning trigger}
We start with well-known connections between normal and $\chi^2$-distributions~\cite{corder2014nonparametric}.
In particular, due to the property  $\left(\hat{z}_{k|k} -z_k\right) \sim \mathcal{N}(0,P_{k|k})$ of the Kalman filter estimate, we know that $\left(\hat{z}_{k|k} - z_k \right) P_{k|k}^{-1} \left(\hat{z}_{k|k} - z_k \right)\T < \chi^2_{1-\alpha,n(n+m)}$ with probability $1-\alpha$ for some $\alpha\in\left(0,1\right)$.
By $\chi^2_{1-\alpha,n(n+m)}$ we denote the $1-\alpha$ quantile of the chi-square distribution with $n(n+m)$ degrees of freedom.
This corresponds to the confidence region in Figure~\ref{fig:Ellipses} depicted in yellow.
The estimated parameters $\hat{z}_{k|k}$ are located inside this ellipsoid around the true system $z_k$ with probability $1-\alpha$.
As a short notation for an expression $z P^{-1} z\T$ we use $\| z\|^2_{P^{-1}}$ in the figure since $P$ as a covariance is positive definite.
If the null-hypothesis \eqref{eq:null} is true, we obtain $\left(\hat{z}_{k|k} - \hat{z}^*\right) P_{k|k}^{-1} \left(\hat{z}_{k|k} - \hat{z}^*\right)\T < \chi^2_{1-\alpha,n(n+m)}$ with probability $1-\alpha$.

Based on these properties, we propose the learning trigger
\begin{align}
		\gamma_{\mathrm{learn}} = 1  \iff
		\left(\hat{z}_{k|k} - \hat{z}^*\right) P_{k|k}^{-1} \left(\hat{z}_{k|k} - \hat{z}^*\right)\T > \chi^2_{1-\alpha,n(n+m)}
		\label{eq:KFTest}
	\end{align}
as a statistical test of level $\alpha$.
Thus, a trigger event occurs when the data-based estimate $\hat{z}_{k|k}$ and the fixed model $\hat{z}^*$ are most likely to differ more than what is expected by the uncertainty in the data.
Further, by using this learning trigger, the probability of false trigger events, \ie triggering despite having an accurate model, is bounded as stated in the following theorem.

\begin{thm}
Consider the dynamical system \eqref{eq:optAss:KF} and let the Assumptions~\ref{ass:KF} hold.
	If the learning trigger \eqref{eq:KFTest}
	is used, then in case of a perfect model, \ie $\hat{z}^* = z_k$,
    \begin{align}
		\mathbb{P} \left[\gamma_{\mathrm{learn}} = 1\right] \le \alpha\,.
	\end{align}
	\label{thm:KFTrigger}
\end{thm}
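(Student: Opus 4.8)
The plan is to reduce the trigger statistic, under a perfect model, to a quadratic form in the Kalman-filter estimation error, and then combine the normal-to-$\chi^2$ relationship with the definition of the chi-square quantile.

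First I would recall the distributional property stated at the end of Section~\ref{subsec:KF} (following~\cite{Spall1984a}): for the parameter filter~\eqref{eq:optAss:KF} under Assumption~\ref{ass:KF}, the estimation error is zero-mean Gaussian with covariance equal to the filter's posterior covariance, \ie $\hat{z}_{k|k} - z_k \sim \mathcal{N}(0, P_{k|k})$. Since $P_{k|k}$ is symmetric positive definite, I would then invoke the elementary fact that for $v \sim \mathcal{N}(0,\Sigma)$ with $\Sigma \succ 0$ and $v \in \mathbb{R}^d$ one has $v\T \Sigma^{-1} v \sim \chi^2_d$ (seen by whitening, $\Sigma^{-1/2} v \sim \mathcal{N}(0, \mathrm{I}_d)$). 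Applied with $d = n(n+m)$, this yields $\left(\hat{z}_{k|k} - z_k\right) P_{k|k}^{-1} \left(\hat{z}_{k|k} - z_k\right)\T \sim \chi^2_{n(n+m)}$.

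Next I would substitute the perfect-model hypothesis $\hat{z}^* = z_k$ into the trigger definition~\eqref{eq:KFTest}: the trigger statistic is then exactly the quadratic form above, so the event $\{\gamma_{\mathrm{learn}} = 1\}$ coincides with $\{\left(\hat{z}_{k|k} - z_k\right) P_{k|k}^{-1} \left(\hat{z}_{k|k} - z_k\right)\T > \chi^2_{1-\alpha,n(n+m)}\}$. Because $\chi^2_{1-\alpha,n(n+m)}$ is by definition the $(1-\alpha)$-quantile of the (continuous) $\chi^2_{n(n+m)}$ law, the probability that a $\chi^2_{n(n+m)}$ variable exceeds it equals $\alpha$, which gives $\mathbb{P}[\gamma_{\mathrm{learn}} = 1] = \alpha \le \alpha$ and hence the claim. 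The bound is stated as an inequality to also cover the borderline cases where $P_{k|k}$ is only positive semidefinite (replace the inverse by a pseudo-inverse; the quadratic form then follows $\chi^2_r$ with $r = \operatorname{rank}(P_{k|k}) \le n(n+m)$, which is stochastically dominated by $\chi^2_{n(n+m)}$, so the exceedance probability is at most $\alpha$) or where the quantile convention only guarantees the CDF to be $\geq 1-\alpha$.

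The point that needs care -- and which I expect to be the main subtlety -- is that $P_{k|k}$ is itself random here: the regressors $C_k = \mathrm{I}_n \otimes d_k\T$ depend on the realized state/input trajectory, so $P_{k|k}$ is not a deterministic sequence as in the textbook Kalman filter. I would resolve this by reading $\hat{z}_{k|k} - z_k \sim \mathcal{N}(0, P_{k|k})$ as a statement conditional on the realized regressor sequence $C_0, \dots, C_{k-1}$ (which determines $P_{k|k}$). The conditional law of the quadratic form is then $\chi^2_{n(n+m)}$ for \emph{every} realization of that sequence -- in particular it does not depend on the conditioning -- so by the law of total probability the same holds unconditionally, and the argument above goes through verbatim. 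Closely related, the Gaussianity of $\hat{z}_{k|k} - z_k$ is not the standard Kalman result, since in the parameters-as-state formulation the regressor $C_k$ is correlated with the noise $w_k$; for this I would rely on the analysis in~\cite{Spall1984a} cited earlier rather than re-deriving it.
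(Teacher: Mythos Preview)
Your proposal is correct and follows essentially the same route as the paper's proof: both invoke the Gaussianity of the Kalman-filter estimation error to obtain a $\chi^2$ distribution for the normalized quadratic form and then read off the tail bound from the quantile definition. Your version is considerably more careful than the paper's two-line sketch---in particular, your conditioning argument to handle the data-dependence of $P_{k|k}$ and the rank-deficient case are points the paper does not make explicit---but the underlying idea is identical.
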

\begin{proof}
The theorem follows directly from the properties of the Kalman parameter filter.
By design, we ensure that the estimates are subject to a normal distribution.
Further, through direct access to the variance, we can normalize the distribution and apply a standard $\chi^2$ test. The test then directly induces the confidence region.
\end{proof}

Thus, the risk of false trigger events and at the same time the sensitivity of the trigger can be adjusted by the probability $\alpha$.
If \eqref{eq:KFTest} is satisfied, the null hypothesis is rejected, and a new model needs to be identified.

The test statistic is also known as squared Mahalanobis distance, which is a generalization of the Euclidean distance to multivariate spaces with different scaling~\cite{DeMaesschalck2000}.
If the covariance matrix is non-singular, the Mahalanobis distance fulfills all properties of a metric.
It scales the space according to the covariance such that points lying on equal contour lines of the normal distribution have the same distance from the center point.
It can also be viewed as a distance of a point to a normal distribution.
This is also how our test can be interpreted. 
The Kalman filter gives a conditional distribution of parameter values with mean $\hat{z}_{k|k}$ and error covariance $P_{k|k}$.
Thus, if the assumptions are fulfilled, $\hat{z}_{k|k}$ to $z_k$ has squared Mahalanobis distance $\| \hat{z}_{k|k} - z_k \|^2_{P_{k|k}^{-1}} < \chi^2_{1-\alpha,n(n+m)}$ with probability $1-\alpha$.

If the test yields $\| \hat{z}_{k|k} - \hat{z}^* \|^2_{P_{k|k}^{-1}} < \chi^2_{1-\alpha,n(n+m)}$, we can conclude from the triangle inequality that at least with probability $1-\alpha$ the model and the true system parameters fulfill $\| \hat{z}^* - z_k \|^2_{P_{k|k}^{-1}} < \sqrt{2} \chi^2_{1-\alpha,n(n+m)}$. 
In Figure~\ref{fig:Ellipses}, this can be interpreted as we test if $ \hat{z}_{k|k}$ is inside the blue ellipsoid \emph{test region} around the model $\hat{z}^*$.
From the Kalman filter, we get the yellow \emph{confidence region}, which contains $\hat{z}_{k|k}$ with probability $1-\alpha$.
The boundary of the ellipsoids has squared Mahalanobis distance $\chi^2_{1-\alpha,n(n+m)}$ from the corresponding center points.
Since $\hat{z}_{k|k}$ is contained in both regions, the confidence region and the test region must intersect.
Hence, the squared distance between $z_k$ and $\hat{z}^*$ fulfills $\| \hat{z}^* - z_k \|^2_{P_{k|k}^{-1}} < \sqrt{2} \chi^2_{1-\alpha,n(n+m)}$ with probability $1-\alpha$.

\begin{rem}
	If the assumptions on the noise terms are not fulfilled, the normal distribution of the Kalman filter estimate does not follow immediately; however, there are properties that can ensure asymptotic convergence to a normal distribution~\cite{Spall1984}.
	Noteworthy, the parameter estimate is still asymptotically normal distributed without the assumption of Gaussian disturbances if it holds that the disturbances have zero mean, \ie $\mathbb{E}(\Delta z_k) = 0,~ \mathbb{E}(w_k) = 0$, and the disturbance covariance matrices and the initial error covariance are bounded.
\end{rem}
\begin{rem}
In Theorem~\ref{thm:KFTrigger}, we assume that the additive process disturbance $w_k$ is independent of the state and has constant covariance~$\Sigma_{w}$.
In practice, however, these assumptions could be violated if the controlled system is slightly nonlinear, for example.
One approach to increase the robustness of the algorithm against such deviations is to assume worse disturbances in the model than those that are actually observed.
Due to the over-approximation of the process noise, the test region of the Kalman filter trigger increases.
This leads to a much wider margin between the test statistics and the threshold value compared to tests with ideal noise assumptions.
This results in a lower sensitivity to small system changes but also in higher robustness to violated model assumptions.
\end{rem}

\section{EXPERIMENT DESIGN }\label{sec:Experiment}

The error covariance $P_{k|k}$ is an essential object of the proposed learning trigger and should be small for an effective test.
At the same time, it is a measure of the accuracy of the corresponding parameter estimate.
Thus, after a trigger event occurs, a learning experiment should be carried out to reduce the uncertainty of the estimate.
In general, the accuracy of a model estimate highly depends on the data and, in particular, the excitation of the system.
For example, consider the least squares estimator \eqref{eq:ls_sol} from Section~\ref{sec:sysid}.
Here, the matrix $C\T C$ can be written as $C\T C = \mathrm{I}_n \otimes \left(\sum_{t=1}^{k-1} d_t d_t\T\right)$.
The collected data must contain at least $(n+m)$ linearly independent vectors $d_t$ such that the inverse of the matrix $C\T C$ exists.
The rank condition on $\sum_{t=1}^{k-1} d_t d_t\T$ is closely related to persistency of excitation of the data sequence $\{d_t\}$.

\subsection{Persistency of excitation and observability}
Persistency of excitation has been introduced in several slightly different notations (cf. \cite{JOHNSTONE1982,Bai1985,Caines1984,GREEN1986351}). In a nutshell, if data has a sufficiently rich information content and gives insight into the system dynamics, then it is possible to guarantee convergence of statistical estimators. 

For finite sequences, persistency of excitation over an interval is defined in~\cite{GREEN1986351}.
\begin{defn}[Persistency of excitation]
	A sequence of data $\{d_t\}_{t=0}^{k-1}$, $d_t\in \mathbb{R}^{n+m}$ is persistently exciting over the time interval $\left\{0,\ldots,k-1\right\}$, if there exists $\epsilon>0$ such that
	\begin{align}
		\sum_{t=0}^{k-1} d_t d_t\T  \succeq  \epsilon\mathrm{I}_{n+m}.
	\end{align}
\end{defn}

Here, we make use of the Loewner order for positive semi-definite matrices. For two positive semi-definite matrices $M_1$ and $M_2$, we say that $M_1 \succeq M_2$ if $M_1-M_2$ is positive semi-definite.
Thus, if $\{d_t\}_{t=0}^{k-1}$ is persistently exciting, the matrix $C\T C$ is positive definite and has an inverse.

For $\{d_t\}_{t=0}^{k-1}$ to be persistently exciting, both components $\{x_t\}_{t=0}^{k-1}$ and $\{u_t\}_{t=0}^{k-1}$ must be persistently exciting.
Since the input sequence $\{u_t\}_{t=0}^{k-1}$ is a design variable, one purpose of experiment design is to ensure persistency of excitation of $\{x_t\}_{t=0}^{k-1}$.
This is contrary to the control objectives during normal operation, where deviations from the setpoint are suppressed. 

Persistency of excitation of $\{d_t\}_{t=0}^{k-1}$ induces observability of the lifted parameter system~\eqref{eq:optAss:KF}.
A linear discrete-time parameter varying system~\eqref{eq:optAss:KF} is said to be completely observable if the observability matrix
\begin{align}
	O_{0} = \begin{bmatrix}
		C_{0} ~
		C_{1} ~
		\ldots ~
		C_{n(n+m)}
	\end{bmatrix}\T
\end{align}
has $\operatorname{rank}(O_{0}) = n(n+m)$~\cite{witczak2017necessary}.
This is the same condition as for persistency of excitation.
Thus, without disturbances, the parameter vector $z_k$ could be determined from persistently exciting data.

While persistency of excitation ensures well-defined solutions to the estimation problem, it is not sufficient to control the estimation error.
Indeed, this means that the estimate may diverge under permanent updates.
Next, we consider an active learning problem to reduce the posterior variance of the estimator through optimized excitation of the system.

\subsection{Active learning}
The error covariance $P$ of the estimator is a natural measure of the accuracy of the estimate. 
Here, we aim at minimizing the trace of $P$, which corresponds to minimizing the sum of the eigenvalues of the error covariance.
If $\operatorname{trace}(P)$ is minimized by the data of the experiment, the data acquisition is called A-optimal~\cite{Pronzato2008}.
By doing such an A-optimal experiment, the sum of squared lengths of principal semi-axes of the confidence ellipsoid is minimized.
For the Kalman filter, the mean and covariance are usually obtained recursively. 
However, there exist also closed forms as derived in~\cite{Han2010}, where the influence of the experiment design on the covariance can be investigated.


The excitation during the experiment could be provided by external reference signals such as pseudo-random binary signals or chirp signals.
However, then also input and state constraints need to be considered.
A different approach was presented in~\cite{Heirung2012}. 
There, a dual MPC is proposed, which considers the expected trace of the error covariance matrix of a recursive least squares estimator with exponential forgetting in the optimization problem.
Here, this approach is adopted and modified such that the MPC propagates a model of the Kalman filter instead of the recursive least squares estimator. 
Its error covariance is jointly minimized in the optimization problem.
Thus, then we obtain the objective function:
\begin{samepage}
	\begin{subequations}
		\label{eq:expMPC}
		\begin{alignat}{2}
			&\hspace{-7mm} \!\min_{X_{t_j},U_{t_j}}        &\qquad& \sum_{k = 0}^{N-1} x_{k|t_j}\T Q x_{k|t_j} + u_{k|t_j}\T R u_{k|t_j} + x_{N|t_j}\T Q_N x_{N|t_j} \notag \\
			&												&\qquad	&\qquad + \nu\operatorname{trace}(\widetilde{P}_{k+1|k+1}) \tag{\vspace{-0.5\baselineskip}\ref{eq:expMPC}}
		\end{alignat}%
	\vspace{-7mm} 
		\begin{alignat*}{4}
			&\text{s.t.} &  && x_{0|t_j} &= x_{t_j},\\
			&                  &&& x_{k+1|t_j} &= \hat{A}x_{k|t_j} + \hat{B}u_{k|t_j}, &~ k \in \{0,\dots,N-1\}\\
			&                  &&& x_{k|t_j} &\in \mathcal{X}, &\quad k \in \{0,\dots,N-1\}\\
			&                  &&& u_{k|t_j} &\in \mathcal{U}, &\quad k \in \{0,\dots,N-1\}\\
			&                  &&& x_{N|t_j} &\in \mathcal{X}_N, &\\
			&                  &&& \widetilde{P}_{0|0} &= P_{t_j|t_j}, &\\
			&                  &&& \widetilde{C}_{k} &=\mathrm{I}_n \otimes \begin{bmatrix}
				x_{k|t_j}\T & u_{k|t_j}\T
			\end{bmatrix}, & k \in \{0,\dots,N-1\}\\
			&                  &&& \widetilde{S}_{k+1} &=\widetilde{C}_k \widetilde{P}_{k+1|k} \widetilde{C}_k\T + \Sigma_w, &k \in \{0,\dots,N-1\}\\
			&                  &&& \widetilde{K}_{k+1} &=\widetilde{P}_{k+1|k} \widetilde{C}_k\T \widetilde{S}_{k+1}^{-1}, &k \in \{0,\dots,N-1\}\\
			&                  &&& \widetilde{P}_{k+1|k+1} &=\widetilde{P}_{k+1|k} - \widetilde{K}_{k+1} \widetilde{C}_k \widetilde{P}_{k+1|k}, &k \in \{0,\dots,N-1\}&.
		\end{alignat*}
	\end{subequations}
\end{samepage}%
Here, $\nu$ is a weighting parameter. Our proposed method is summarized and abstracted in Algorithm~\ref{algo}.

Unfortunately, $\widetilde{P}_{k|k}$ is a nonlinear function of the system's states and inputs.
Therefore, the optimization problem becomes nonlinear and nonconvex.
To make the computation feasible, the standard MPC problem without the cost of the covariance matrix can be computed and used as an initial guess for the harder optimization problem.
Then, even if no global optimum is found, a feasible solution can be provided.
In~\cite{Heirung2012}, no stability guarantees are given about the considered approach.
In the case of re-identification after a wrong model is detected, this is a hard problem since the model used for control itself is not trustworthy.
In~\cite{Anderson2018}, a robust re-identification procedure for MPC is given.
However, there the considered system changes are bounded such that the controller is stabilizing for all possible system changes.

\section{SIMULATION EXAMPLE}
Next, we show the proposed ETL framework in a numerical simulation\footnote{The Matlab code will be made publicly available upon publication.}.
In particular, we demonstrate the following properties:
\begin{compactitem}
    \item The ETL framework detects changes in the dynamics;
    \item we can adapt to change and make the uncertainty ellipsoid of the estimator arbitrarily small during dedicated learning experiments; and
    \item due to the MPC-based nature of the proposed learning experiments, we can simultaneously shrink the uncertainty and satisfy state and control constraints.
\end{compactitem}
To contrast our approach to adaptive approaches, we compare our method to continuously updating model parameters.
We show that continuously updating methods return after the learning experiment back to a steady-state, where the estimator performs a random walk in an ellipsoid that can grow arbitrarily large. Our method does not suffer from this issue.

\begin{algorithm2e}[tb]
	\DontPrintSemicolon
	\KwData{Initial model\;
		Initial covariance parameters\;
		Set up parameter filter\; 
		Set up nominal controller\;
		Set mode = "control"}
	\While{true}{
		Measure state\;
		Update parameter filter estimate and uncertainty\;
		\Switch{mode}{
			\Case{"control"}{
				Evaluate trigger condition\;
				\If{trigger condition = true}{
					Set mode = "experiment"\;
				}
				Apply nominal input
			}
			\Case{"experiment"}{
				Check condition to stop experiment\;
				\If{stopping condition = true}{
					Set mode = "control"\;
					Update model and controller
				}
				Apply experiment input
			}
			
		}
	}
	\caption{Event-triggered learning algorithm\label{algo}}
\end{algorithm2e}

\subsection{Setup}
The considered system describes the dynamics of a servomechanism consisting of a DC-motor, a gear-box, an elastic shaft, and an uncertain load, which is adopted from~\cite{BEMPORAD1998} and~\cite{schwenkel2020online}.
We chose this example as it allows us to illustrate the efficacy of our approach to changes in the dynamics, which will be represented by changes in the load.  While there are also other control approaches to deal with load changes, we want to demonstrate that our method is flexible and can readily cope with a problem without any further adaptation.

The continuous-time state-space equations are given by
\begin{align}
	\Dot{x} = 
	\left[\begin{matrix}
		0 &1 &0 &0\\
		-\frac{k_\theta}{J_\mathrm{L}} &-\frac{\beta_\mathrm{L}}{J_\mathrm{L}} &\frac{k_\theta}{\rho J_\mathrm{L}} &0\\
		0 &0 &0 &1\\
		\frac{k_\theta}{\rho J_\mathrm{M}} &0 &-\frac{k_\theta}{\rho^2 J_\mathrm{M}} &-\frac{\beta_\mathrm{M} R+K_\mathrm{T}^2}{J_\mathrm{M} R}\\
	\end{matrix}\right]
	x + 
	\left[\begin{matrix}
		0\\
		0\\
		0\\
		\frac{K_\mathrm{T}}{R J_\mathrm{M}}\\
	\end{matrix}\right]
	u\,,
\end{align}
where the state vector $x\in\mathbb{R}^4$ consists of the load angle, the motor angle and their time-derivatives. 
The input $u$ corresponds to the input DC voltage in Volt, which is constrained by $|u| \le 220$.
In addition, the state constraint $\left|\begin{bmatrix} k_\theta &0 &\sfrac{-k_\theta}{\rho} &0 \end{bmatrix} x\right|\le 78.5398$ must be fulfilled.
In the example, all model parameters are known (and can be found in~\cite{BEMPORAD1998}) except for the load inertia $J_\mathrm{L}$, which has the uncertainty range $10\, J_\mathrm{M} \le J_\mathrm{L} \le 30\, J_\mathrm{M}$.
The continuous-time model is converted into a discrete-time model with the sampling time $T_\mathrm{s} = 0.1\,\mathrm{s}$ using zero-order hold on the input. For the sake of notational convenience, we omit the units for the discretized system. 
On the discrete-time system, additive disturbances $w_k$ are introduced, where $w_k$ is drawn from a normal distribution with zero mean and covariance matrix $\Sigma_{w} = \mathrm{diag}\left( \begin{bmatrix} 0.99 & 0.99 & 0.939 & 0.056 \end{bmatrix}\cdot10^{-4}\right)$. 

Simulations of $3000$ time steps are performed. During the first 1000 steps, the nominal system with $ J_\mathrm{L} = 20\, J_\mathrm{M}$ is used. At time steps $1000$ and $2000$, the simulated system changes its parameter to $ J_\mathrm{L} = 22\, J_\mathrm{M}$ and  $ J_\mathrm{L} = 19\, J_\mathrm{M}$, respectively.
A standard MPC with zero terminal constraint is introduced based on the nominal model to control the system. The prediction horizon is set to 6 time steps.
We use the Kalman filter, as described in Section~\ref{subsec:KF}, for the parameter filter. The assumed covariance of system changes $\Sigma_{z}$ is designed using prior knowledge about possible parameter variations.
Thus, parameters that are unlikely to change are estimated very precisely, while the estimation of possibly changing parameters is adapting more rapidly.

To compare our ETL approach to nominal control without model adaption and to permanently adapted models, three similar simulations are performed.

\subsection{Event-triggered learning}
First, the ETL approach is applied to the system. The model and the controller is kept constant if no change is detected.
For change detection, the test statistic from Theorem~\ref{thm:KFTrigger} is computed. The graph of the test statistic over time is depicted in Figure~\ref{fig:tikz:timeETL}.
It is visible that the test statistic rises when the system parameters are changed after time step $1000$ and $2000$.
Eventually, the threshold is exceeded, which means that a significant model error is detected.
Then, a learning experiment using the experiment MPC~\eqref{eq:expMPC} is performed. In this period of $200$ time steps, a higher excitation of the system is generated.
This fact can also be noticed in Figure~\ref{fig:tikz:trajectoryETL}, where the trajectory of angles is shown divided into nominal control and experiment operation.
After the experiment, the current parameter estimate is set as the new model and used to update the controller and the test.
Model updates only happen directly after the learning experiments. For the remaining time, the model is kept constant.

\begin{figure}[tb]
	\includestandalone[width=\columnwidth, mode=buildnew]{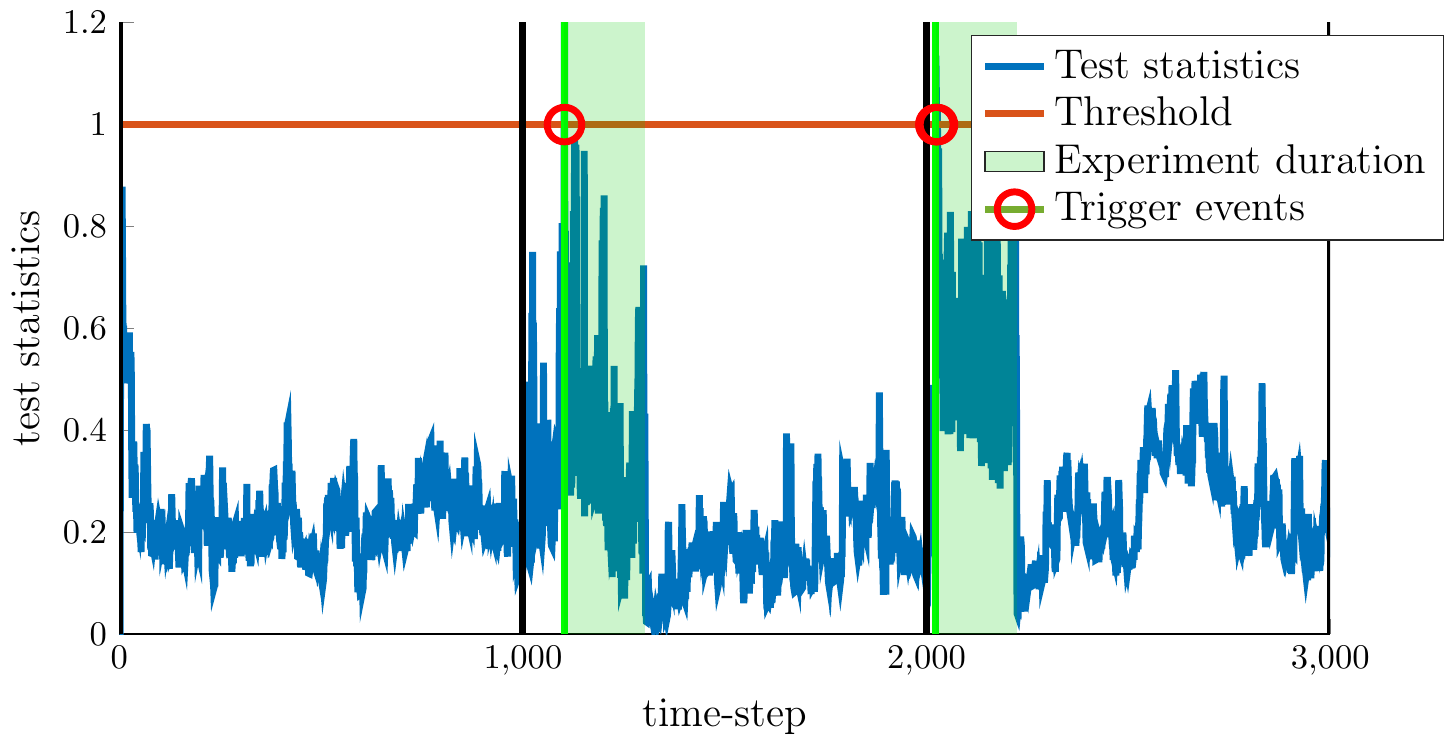}
	\caption{
		Test statistics of the learning trigger.
		The test condition \eqref{eq:KFTest} is evaluated in every time step. 
		In the figure, the test statistics and the threshold are normalized, such that a test statistic larger than one leads to a learning trigger event. This happens shortly after the true system parameters change at time steps 1000 and 2000, respectively.
		Thus, changing system parameters are detected reliably without triggering when the parameters stay constant.
	}
	\label{fig:tikz:timeETL}
\end{figure}

\subsection{Permanent model updates}\label{sec:alwaysUpdates}
In the second simulation, in every time step, the current parameter estimate is used to update the model and the controller. 
The estimation algorithm is capable of tracking the true system parameters. However, without significant excitation, the uncertainty stays large, and the estimated parameters perform a random walk inside the uncertainty ellipsoid.
This behavior is depicted in Figures~\ref{fig:tikz:paramETL} and~\ref{fig:tikz:paramAlways}, where the estimation error of four exemplary parameters is shown for the ETL and the permanent update approach.
In Figure~\ref{fig:tikz:paramAlways}, the current estimate is always used as the model (permanent updates), which is also used for control. Thus, the model is always varying, which can be seen based on the fluctuating error (red line).
In contrast, as depicted in Figure~\ref{fig:tikz:paramETL}, in the ETL approach, the model is kept constant unless a trigger occurs. Before updating to a new model, the system is excited in the experiment to obtain higher precision.
If the estimation error after the update grows, this does not affect the used model and the control algorithm.

\begin{figure}[tb]
	\includestandalone[width=\columnwidth, mode=buildnew]{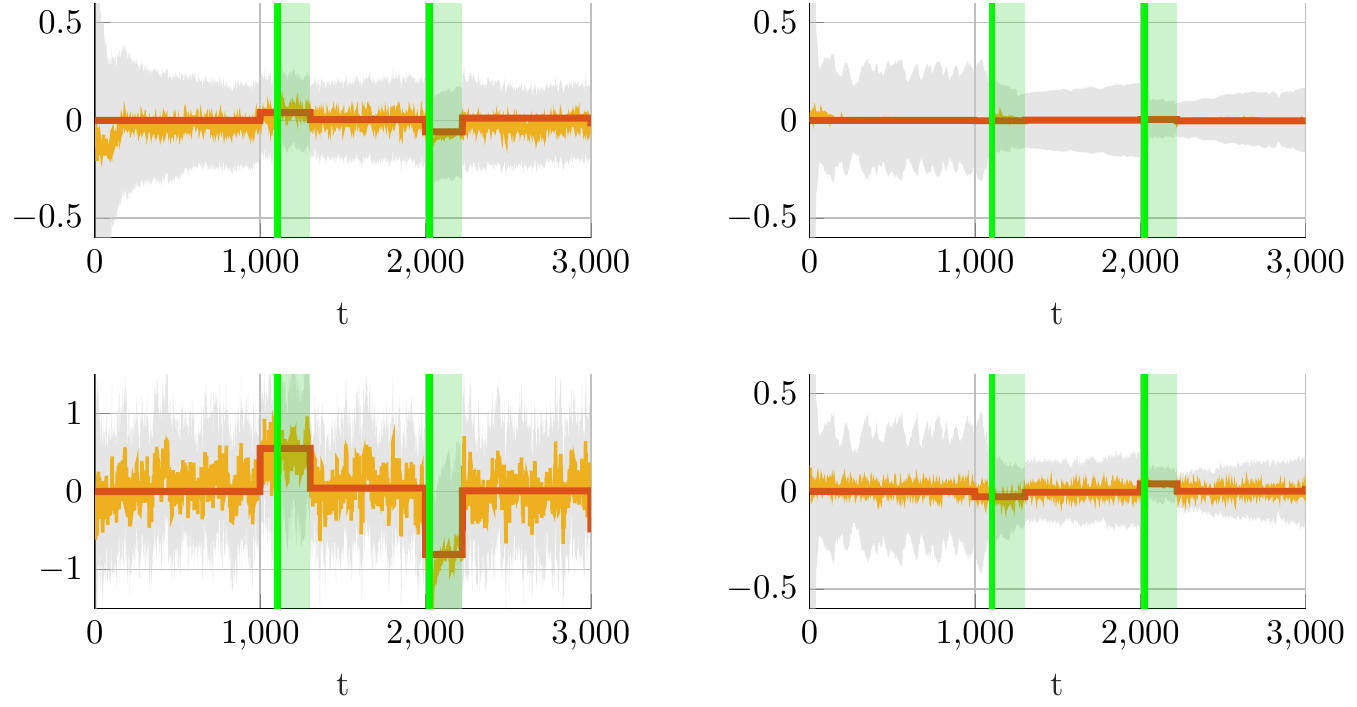}
	\caption{
		Evolution of four exemplary parameters and their estimate using ETL.
		In red, we show the error between system and model parameters. The yellow line displays the estimated model error using the current parameter point estimate. The gray region depicts a projection of the confidence ellipsoid onto these parameters centered around the estimated error.
		The model error stays close to zero, while the estimation error varies.
		When the gray region does not include zero, a learning experiment is triggered that excites the system to force the yellow line toward the red one.
		Thus, the uncertainty is reduced, and after the experiment, the model is updated with a precise estimate.
	}
	\label{fig:tikz:paramETL}
\end{figure}
\begin{figure}[tb]
	\includestandalone[width=\columnwidth, mode=buildnew]{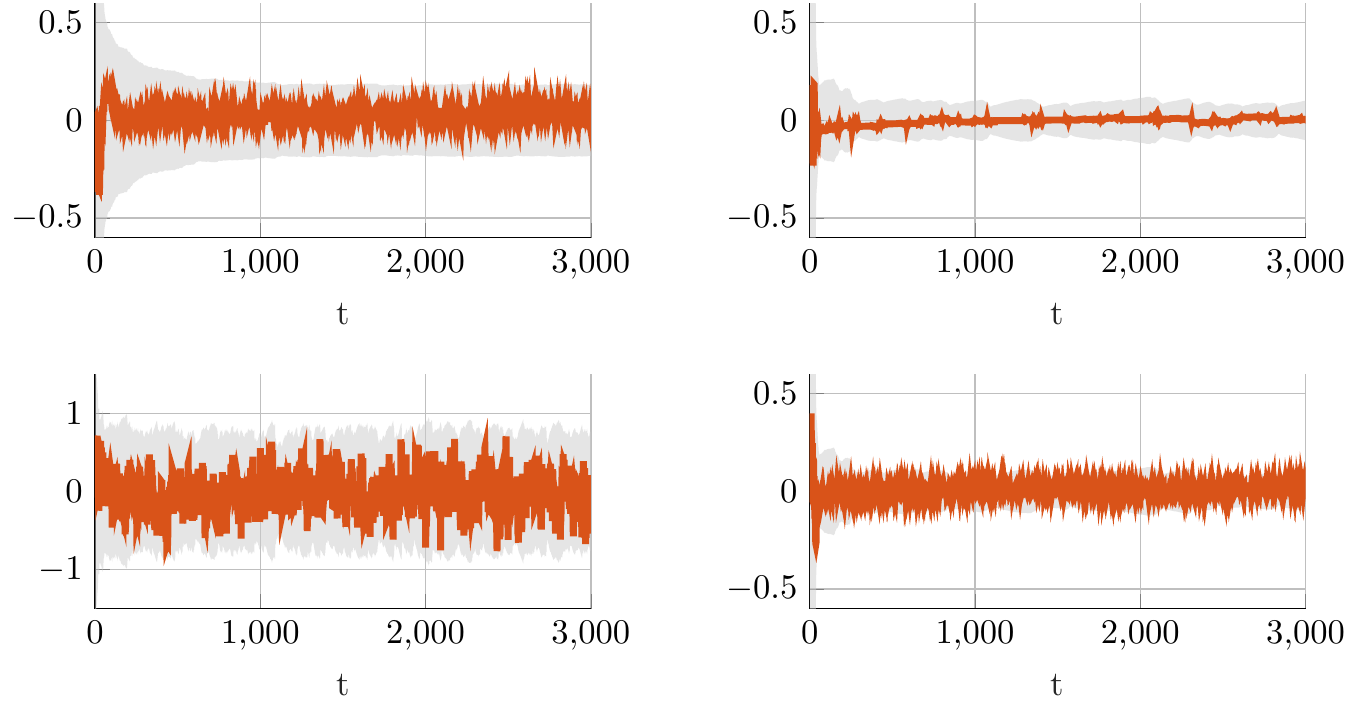}
	\caption{
		Evolution of the same four exemplary parameters and their corresponding estimates using permanent model updates.
		In contrast to Figure~\ref{fig:tikz:paramETL}, the model error oscillates heavily. It is not even possible to detect the introduced changes in the system dynamics (at $t=1000$ and $t=2000$) within this random walk.
	}
	\label{fig:tikz:paramAlways}
\end{figure}

\subsection{No model updates}\label{sec:noUpdates}
Third, the same kind of simulation is performed without adapting the model and the controller.
This corresponds to purely relying on the robustness of the control system against small parameter deviations.
In Figure~\ref{fig:tikz:paramNever}, the actual model error for four selected parameters is shown. 
As no update is carried out, the error will remain large when the system changes its parameters.
This can be disadvantageous as this error deteriorates the control performance.
In Figure~\ref{fig:tikz:trajectoryAlwaysNever}, it is visible that the state constraints are slightly violated in the experiments using permanent updates and constant nominal models that are never adopted, respectively.

\begin{figure}[tb]
	\includestandalone[width=\columnwidth, mode=buildnew]{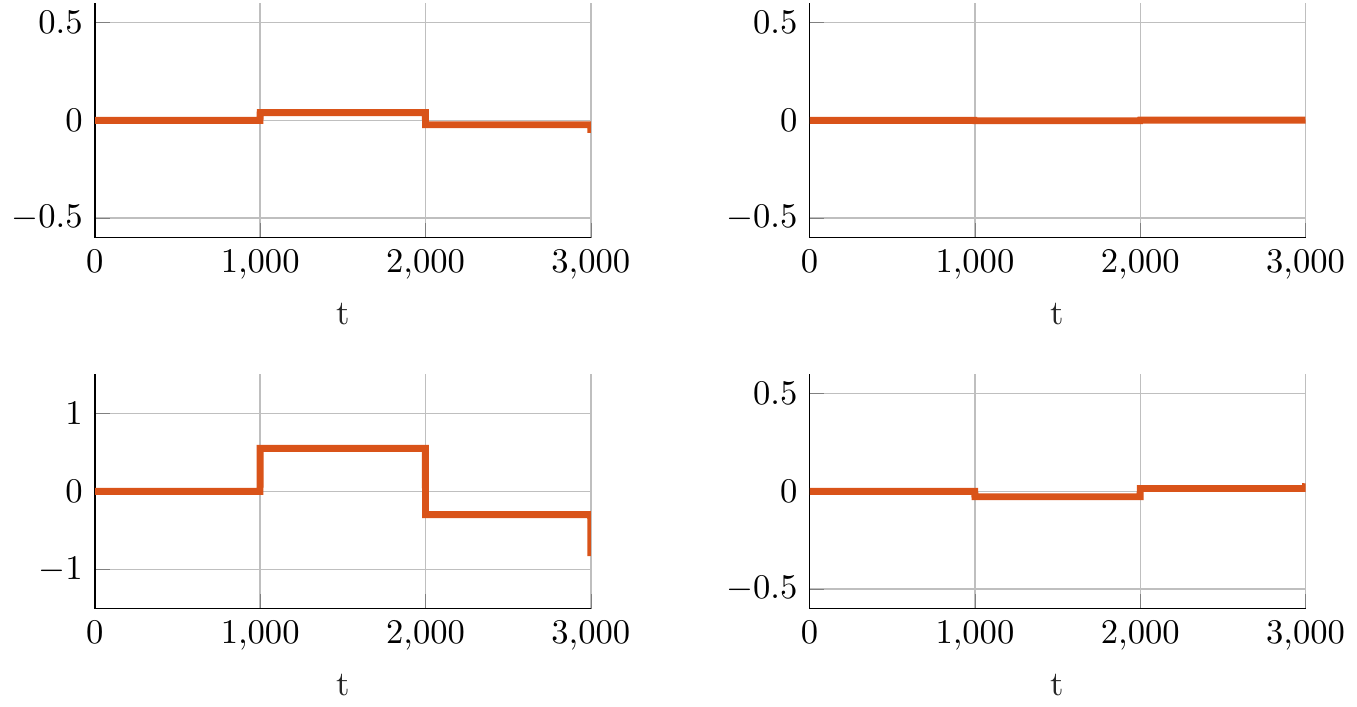}
	\caption{
		Evolution of four exemplary parameters and their estimate using the nominal model for control.
		Without model updates, the model contains errors when the system changes its parameters.
	}
	\label{fig:tikz:paramNever}
\end{figure}

\begin{figure}[tb]
	\includestandalone[width=\columnwidth, mode=buildnew]{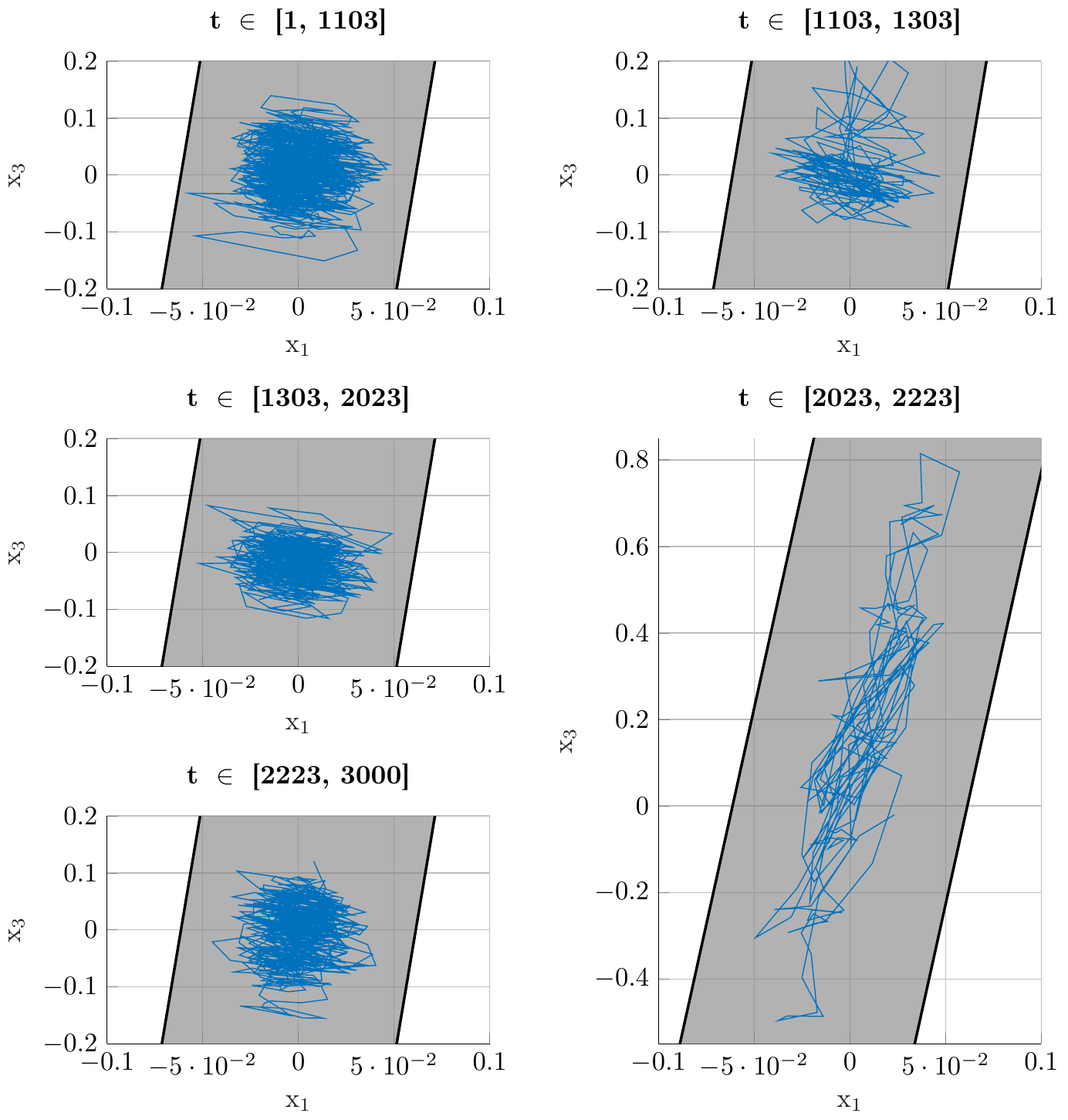}
	\caption{
		Trajectory of the angles $x_1$ and $x_3$ using ETL for control partitioned by simulation section. During the experiments (right column), the excitation is clearly larger.
		The gray area depicts the state constraints.
	}
	\label{fig:tikz:trajectoryETL}
\end{figure}

\begin{figure}[tb]
	\begin{subfigure}[b]{0.48\columnwidth}
		\centering
		\includestandalone[width=\textwidth, mode=buildnew]{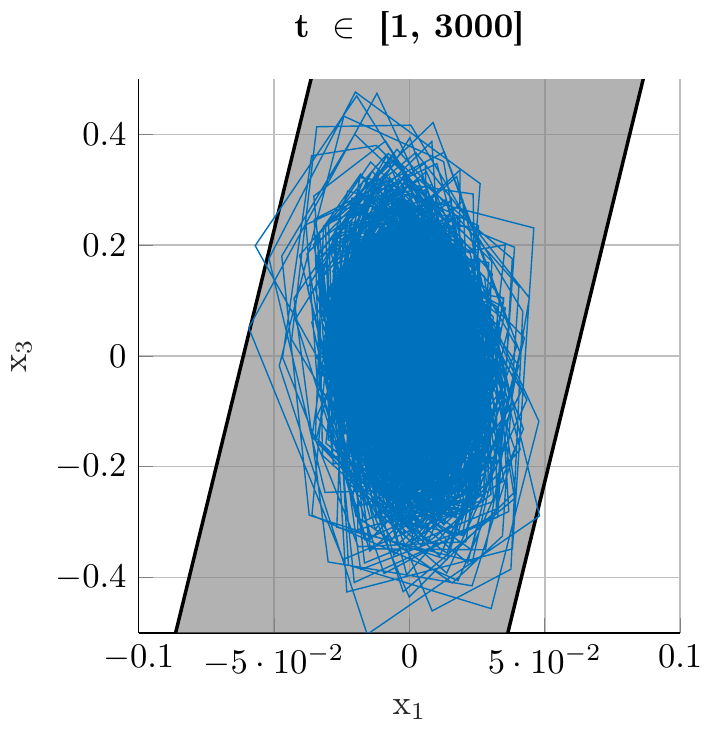}
		\caption{Permanent model updates}
		\label{fig:tikz:trajectoryAlways}
	\end{subfigure}
	\hfill
	\begin{subfigure}[b]{0.48\columnwidth}
		\centering
		\includestandalone[width=\textwidth, mode=buildnew]{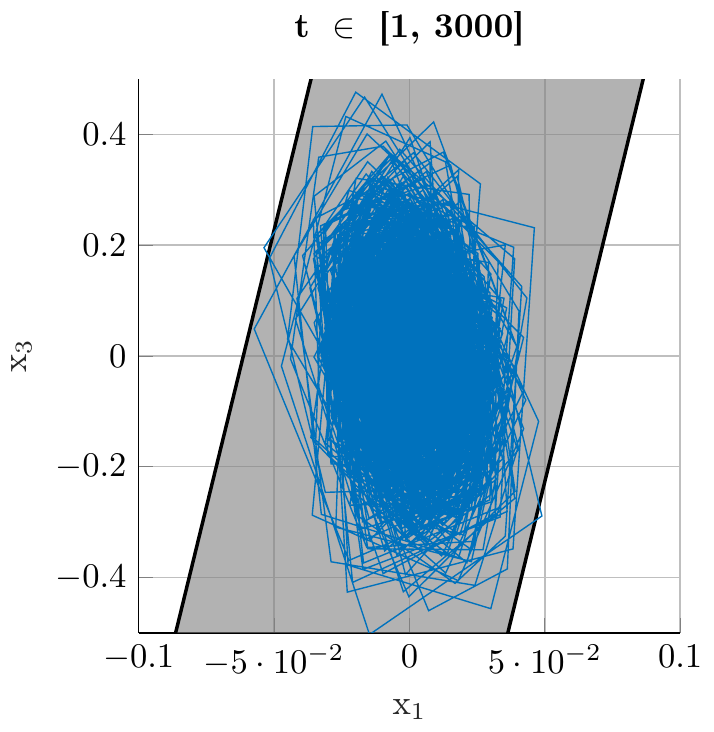}
		\caption{Nominal model}
		\label{fig:tikz:trajectoryNever}
	\end{subfigure}
	\caption{
		Trajectory of the angles $x_1$ and $x_3$ using (\ref{fig:tikz:trajectoryAlways}) permanent model updates for control and (\ref{fig:tikz:trajectoryNever}) the nominal model for control.
		The color-coding is the same as in Figure~\ref{fig:tikz:trajectoryETL}. In contrast to the ETL approach in Figure~\ref{fig:tikz:trajectoryETL}, we can observe state constraint violations here.
	}
	\label{fig:tikz:trajectoryAlwaysNever}
\end{figure}

\subsection{Results}
Updating the model \emph{all the time} or \emph{never} are both problematic. 
Without any updates (Section \ref{sec:noUpdates}), the nominal model will never adapt to changes. Thus, the model error will indefinitely cause problems.
On the other hand, updating the model in every time step (Section \ref{sec:alwaysUpdates}) with uninformative data also results in poor models even when the true system remains unchanged. Both of these extremes are problematic and lead to issues in terms of violating constraints (\cf Figure~\ref{fig:tikz:trajectoryAlwaysNever}) and performance. 

Event-triggered learning addresses these issues and keeps the model constant as long as no significant deviations are detected between the model and the current estimate.
Thus, a change of the model is only performed if the data indicates a significant model error.
Keeping the model fixed during nominal operation prevents the illustrated divergence issues of permanent updates.

The effect of the learning experiment on the estimation error is shown by the results in Table~\ref{tab:averageError}, where the average squared model parameter error is given for the different simulations.
If we neglect the periods of the simulation in which learning experiments took place ("Excluding excitation" in Table \ref{tab:averageError}), the model error is by far the smallest for ETL.
This once again highlights that the excitation during the learning experiment yields an accurate model.

If the whole duration of the simulation is taken into account, the permanently updated model has a smaller average error than the simulation with ETL and learning experiments.
This is a result of the immediate response of the estimate -- and hence also of the model and controller --  after a system change.
In ETL, we still considered the old model belief, which we have detected as inaccurate, during the learning experiment. 
However, obtaining a new, accurate model takes some time and requires excitation, and thus, learning has its own price.
Clearly, the longer the model remains unchanged after learning, the higher the benefit of an accurate model.  

\begin{table}[tb] %
	\caption{Average squared model parameter error [\SI{e-3}{}].}
	\label{tab:averageError}
	\centering
	\begin{tabular}{ c | c  c c }%
		\toprule
		&ETL& Permanent updates&No updates\\
		\midrule
		Whole simulation&4.359&3.197&7.123\\
		Excluding excitation&1.065&3.322&6.581\\
		\bottomrule
	\end{tabular}
\end{table}

\section{CONCLUSION AND FUTURE WORK}

We propose a parameter filter-based learning trigger that generalizes previous work on event-triggered learning. 
By considering the inherent notion of uncertainty provided by Kalman-type filters, we can combine point estimates with powerful statistical tests to trigger learning experiments on necessity.
Further, we are able to provide optimal excitation signals that specifically target parameters with high uncertainty. 

We provide an ETL framework that joins i) learning trigger, ii) online model learning, and iii) experiment design. The core assumption that makes this possible is the linearity of the dynamics, which is rooted deep inside the parameter filter. Extending this to nonlinear systems, which we plan for future work, requires significant extensions of all aspects i)--iii). First steps for the trigger design, we have taken in~\cite{solowjow2020kernel}, and a new type of causal model learning was proposed in~\cite{baumann2020identifying}. Further, Gaussian process-based approaches such as~\cite{umlauft2020smart} might also be a promising approach for nonlinear extensions.





 \printbibliography

\begin{IEEEbiography}[{\includegraphics[width=1in,height=1.25in,clip,keepaspectratio]{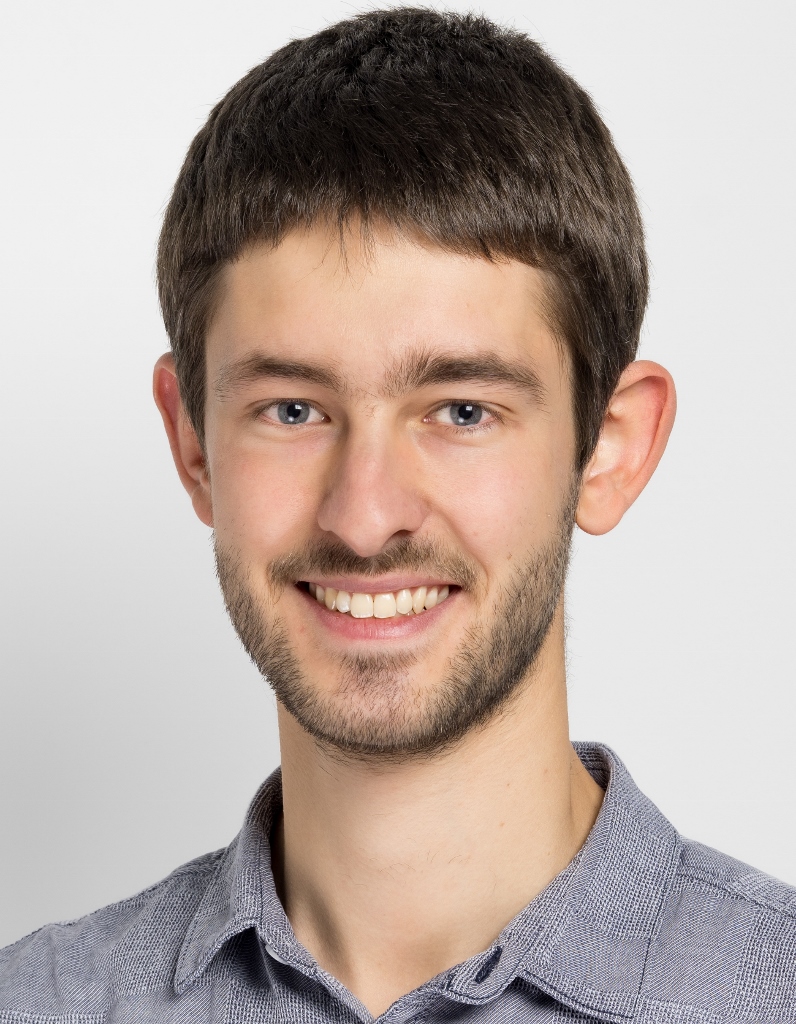}
	}]{Sebastian Schlor} received his B.\ Sc.\ and M.\ Sc.\ degree in Engineering Cybernetics from the University of Stuttgart, Stuttgart, Germany, in 2017 and 2020, respectively, and is currently a Ph.\ D.\ student at the University of Stuttgart with the Institute for Systems Theory and Automatic Control under the supervision of Prof.\ Frank Allgöwer.
His research interests include the area of privacy and security of dynamical systems and encrypted control.
\end{IEEEbiography}

\begin{IEEEbiography}[{\includegraphics[width=1in,height=1.25in,clip,keepaspectratio]{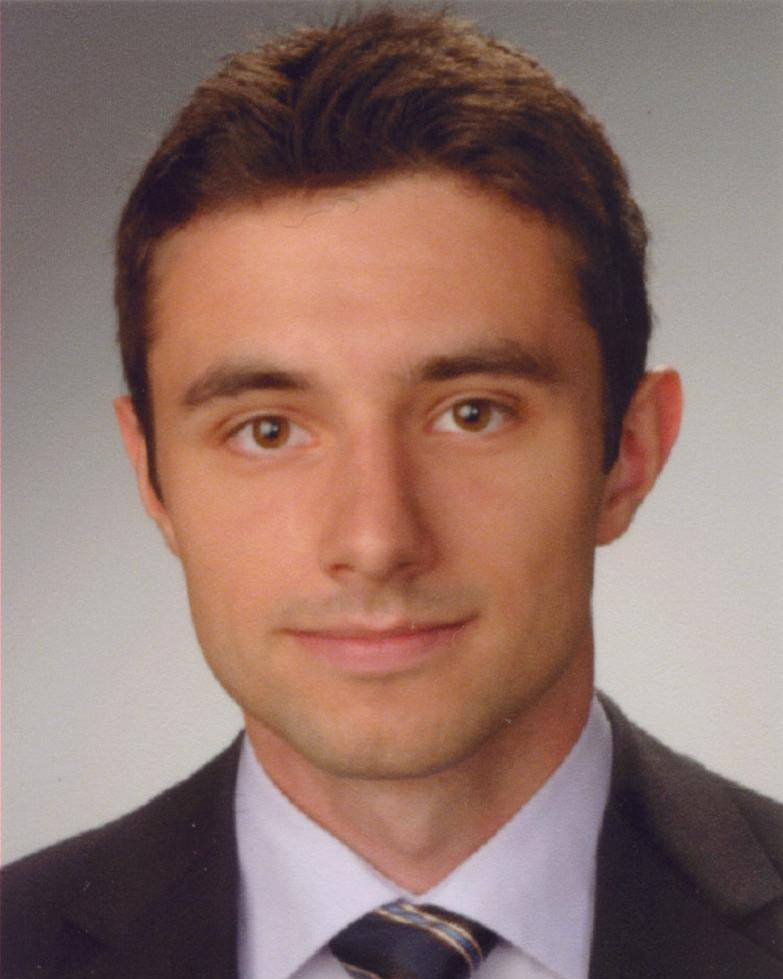}
	}]{Friedrich Solowjow} received B.\ Sc.\ degrees in
Mathematics and Economics from the University
of Bonn in 2014 and 2015, respectively, and a M.\ Sc.\
degree in Mathematics also from the University of
Bonn in 2017. He is currently a Ph.\ D.\ student in
the Intelligent Control Systems Group at the Max
Planck Institute for Intelligent Systems, Stuttgart,
Germany and a member of the International Max
Planck Research School for Intelligent Systems. His
main research interests are in systems and control
theory and machine learning.
\end{IEEEbiography}

\begin{IEEEbiography}[{\includegraphics[width=1in,height=1.25in,clip,keepaspectratio]{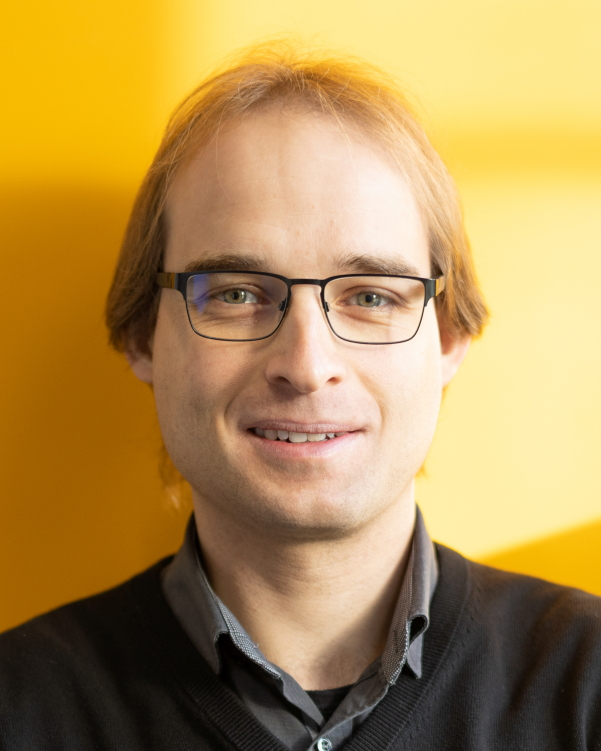}
	}]{Sebastian Trimpe} (M’12) received the B.\ Sc.\ degree
in general engineering science and the M.\ Sc.\ degree
(Dipl.-Ing.) in electrical engineering from Hamburg
University of Technology, Hamburg, Germany, in
2005 and 2007, respectively, and the Ph.\ D.\ degree
(Dr.\ sc.) in mechanical engineering from ETH Zurich,
Zurich, Switzerland, in 2013. Since 2020, he has been a
full professor at RWTH Aachen University, Germany,
where he heads the Institute for Data Science in Mechanical Engineering. Before, he was an independent
Research Group Leader at the Max Planck Institute
for Intelligent Systems in Stuttgart and Tübingen, Germany. His main research
interests are in systems and control theory, machine learning, networked and
autonomous systems. Dr. Trimpe is the recipient of several awards, among others,
the triennial IFAC World Congress Interactive Paper Prize (2011), the Klaus
Tschira Award for achievements in public understanding of science (2014),
and the Best Paper Award of the International Conference on Cyber-Physical
Systems (2019).
\end{IEEEbiography}

\end{document}